\documentclass[acmsmall,screen]{acmart}

\usepackage{amsmath}
\usepackage{stmaryrd}
\usepackage{pifont}
\usepackage{booktabs}
\usepackage{tabularx}
\usepackage{array}
\usepackage{multirow}
\usepackage{enumitem}
\usepackage{algorithm}
\usepackage{algpseudocode}
\usepackage{listings}
\usepackage{xspace}
\usepackage{tikz}
\usepackage{fontawesome5}
\usepackage{pgfplots}
\pgfplotsset{compat=1.18}
\usepgfplotslibrary{groupplots}
\definecolor{tpFalse}{HTML}{B4472A}
\definecolor{tpAttr}{HTML}{1F5C8B}
\definecolor{tpHuman}{HTML}{6E7B8B}
\definecolor{tpLight}{HTML}{C9D6E0}
\definecolor{tpDS}{HTML}{2C6E8F}
\definecolor{tpQW}{HTML}{C08A2E}
\definecolor{tpOk}{HTML}{4A7C59}
\pgfplotsset{
  tpBase/.style={
    font=\footnotesize,
    tick label style={font=\scriptsize},
    label style={font=\scriptsize},
    legend style={font=\scriptsize, draw=none, fill=none},
    axis line style={draw=black!55},
    tick style={draw=black!55},
    grid style={draw=black!12},
  },
}
\pgfplotsset{
  tpFig/.style={
    tpBase,
    title style={font=\footnotesize\bfseries, yshift=1.5pt},
    ymajorgrids,
    axis y line*=left,
    tick align=outside,
    major tick length=2pt,
    legend style={
      at={(0.5,-0.185)}, anchor=north, legend columns=-1,
      draw=none, fill=none, font=\scriptsize,
      /tikz/every even column/.append style={column sep=5pt},
    },
    legend image post style={scale=0.68},
  },
}
\usepackage{float}
\usepackage{needspace}
\usetikzlibrary{arrows.meta,positioning,fit,backgrounds,calc}
\usetikzlibrary{patterns,patterns.meta}
\usepackage{placeins}

\lstdefinelanguage{ACL2}{
  morekeywords={defun,defthm,defthmd,defxdoc,if,endp,car,cdr,cons,nil,t,include-book,implies,and,or,equal,let,declare,xargs,skip-proofs},
  sensitive=true,
  morecomment=[l]{;},
  morestring=[b]"
}
\lstdefinelanguage{Lean}{
  morekeywords={def,theorem,by,rw,simp,rfl,String,List,Nat,Bool,fun,match,with},
  sensitive=true,
  morecomment=[l]{--},
  morestring=[b]"
}
\newcommand{\tool}{DocAuth\xspace}

\newcolumntype{Y}{>{\raggedright\arraybackslash}X}
\newcolumntype{R}{>{\raggedleft\arraybackslash}X}
\newcolumntype{L}{>{\hsize=1.8\hsize}>{\raggedright\arraybackslash}X}
\newcolumntype{N}{>{\hsize=0.8\hsize}>{\raggedleft\arraybackslash}X}
\newcommand{\tpTableSetup}{\footnotesize\setlength{\tabcolsep}{5pt}\renewcommand{\arraystretch}{1.22}}

\newtheorem{definition}{Definition}
\newtheorem{lemma}{Lemma}

\newcommand{\Valid}{\mathsf{Valid}}

\title[Beyond Correctness: Attribution of Natural-Language Documentation]{Beyond Correctness: Attribution of Natural-Language Documentation for Formal Libraries}

\renewcommand\footnotetextcopyrightpermission[1]{}
\setcopyright{none}
\acmDOI{}
\acmISBN{}
\author{Sirui Jiang}
\affiliation{%
  \institution{Zhejiang Sci-Tech University}
  \city{Hangzhou}
  \state{Zhejiang}
  \country{China}}
\email{2024337621116@mails.zstu.edu.cn}
\author{Dongming Xiang}
\affiliation{%
  \institution{Zhejiang Sci-Tech University}
  \city{Hangzhou}
  \state{Zhejiang}
  \country{China}}
\email{dmxiang@zstu.edu.cn}
\author{Shichao Wu}
\affiliation{%
  \institution{Zhejiang Sci-Tech University}
  \city{Hangzhou}
  \state{Zhejiang}
  \country{China}}
\email{2024210602019@mails.zstu.edu.cn}
\author{Yuxuan Zhou}
\affiliation{%
  \institution{Zhejiang Sci-Tech University}
  \city{Hangzhou}
  \state{Zhejiang}
  \country{China}}
\email{2024316101003@mails.zstu.edu.cn}
\author{Yihong Wu}
\affiliation{%
  \institution{Zhejiang Sci-Tech University}
  \city{Hangzhou}
  \state{Zhejiang}
  \country{China}}
\email{yihongwu88@126.com}
\author{Fangze Zhu}
\affiliation{%
  \institution{Zhejiang Gongshang University}
  \city{Hangzhou}
  \state{Zhejiang}
  \country{China}}
\email{Zhufangz2402090131@163.com}
\author{Wang Lin}
\affiliation{%
  \institution{Zhejiang Sci-Tech University}
  \city{Hangzhou}
  \state{Zhejiang}
  \country{China}}
\email{linwang@zstu.edu.cn}
\author{Guanjun Liu}
\affiliation{%
  \institution{Tongji University}
  \city{Shanghai}
  \country{China}}
\email{liuguanjun@tongji.edu.cn}

\begin{document}

\begin{abstract}
Formal languages are widely used in software engineering to specify and verify program properties. Because their mathematical notation and logic are hard to read, developers often rely on natural-language documentation to understand the functions in formal libraries. Yet a documentation statement can be provably true and still hold after the function's behavior changes.
We call a statement's dependence on the function's specific behavior attribution, and we give a method to check it.
We design and implement DocAuth, a documentation-generation method that uses attribution to decide what to publish and attaches to each published sentence the formal property recovered from it, together with checkable evidence.
On $150$ functions from the Community Books library of the ACL2 theorem prover, three large language models produce proved but non-attributable properties for about a third of the functions. DocAuth matches or exceeds the attribution rate of a mutation-based filter while delivering $37\%$ more attributable sentences. Compared with a pipeline that checks truth alone, it raises the share of published sentences that a second model confirms as attributable from $82.9\%$ to $94.7\%$.
These results show that provable truth does not guarantee that documentation reflects what a function actually does. Attribution complements truth as a criterion for generating and evaluating documentation for formal libraries.
\end{abstract}

\begin{CCSXML}
<ccs2012>
 <concept><concept_id>10011007.10011006.10011050</concept_id><concept_desc>Software and its engineering~Software verification</concept_desc><concept_significance>500</concept_significance></concept>
 <concept><concept_id>10011007.10011074.10011099</concept_id><concept_desc>Software and its engineering~Software libraries and repositories</concept_desc><concept_significance>300</concept_significance></concept>
 <concept><concept_id>10010147.10010178.10010219</concept_id><concept_desc>Computing methodologies~Natural language generation</concept_desc><concept_significance>300</concept_significance></concept>
</ccs2012>
\end{CCSXML}
\ccsdesc[500]{Software and its engineering~Software verification}
\ccsdesc[300]{Software and its engineering~Software libraries and repositories}
\ccsdesc[300]{Computing methodologies~Natural language generation}
\keywords{attribution, formal languages, documentation generation, theorem proving, large language models}
\maketitle
\pagestyle{plain}
\thispagestyle{plain}

\section{Introduction}
Establishing that a program behaves as intended is a fundamental task in software engineering, especially for systems with stringent reliability requirements. Formal languages let developers state the properties a program should satisfy in precise mathematical notation and establish them by mechanized proof, and they have been applied to processors and systems software~\citep{hunt2017industrial}. Theorem provers such as ACL2~\citep{kaufmann2000computer}, Lean~\citep{demoura2015lean}, Coq~\citep{coqmanual}, and Isabelle/HOL~\citep{nipkow2002isabelle} have accumulated large libraries of formally defined functions and proved theorems. The notation, however, is hard to read: to work out what a function does, a reader often has to unfold nested definitions, trace recursion, and piece together behavior that several theorems characterize only jointly. Formal libraries therefore need natural-language documentation alongside their definitions and proofs, so that developers can understand and reuse their functions~\citep{davis2014industrial}. Now that large language models (LLMs) generate documentation and properties at scale, the generated content is usually checked for correctness: whether it can be proved, or whether it is consistent with the code~\citep{ma2025specgen,sun2023clover}.

\begin{figure}[t]
\footnotesize
\lstset{basicstyle=\ttfamily\scriptsize,frame=none,aboveskip=1pt,belowskip=1pt,
        xleftmargin=2pt}
\newcommand{\exsep}{\par\vspace{2pt}\hrule\vspace{3pt}}
\newcommand{\exsay}[1]{\emph{“#1”}\par}

\textbf{(a) ACL2: Restating the declared return type}\par
\begin{lstlisting}[language=ACL2]
(define implode ((x character-listp))
  :returns (str stringp) ...)
(defthm implode-no-longer (<= (len (implode x)) (len x)))
\end{lstlisting}
\exsay{Under ACL2's \texttt{len} measure, the result of \texttt{implode} is no longer than the input.}
\exsep

\textbf{(b) Lean: An instance of congruence}\par
\begin{lstlisting}[language=Lean]
def canon (s : String) : String := ...
theorem canon_congr {a b : String} (h : a = b) :
    canon a = canon b := by rw [h]
\end{lstlisting}
\exsay{\texttt{canon} maps equal inputs to equal outputs.}

\caption{Proved properties that hold regardless of the target's implementation}
\Description{Two code snippets and their corresponding documentation statements. The upper part shows the definition of \texttt{implode} in ACL2 and a theorem stating that the length does not increase; the lower part shows the definition of \texttt{canon} in Lean and a theorem stating that equal inputs produce equal outputs.}
\label{fig:motivating}
\end{figure}

A documentation statement can, however, be provably true and still say nothing about the target function's specific behavior. Figure~\ref{fig:motivating} shows two examples. For \texttt{implode} from the ACL2 Community Books, the statement ``Under ACL2's \texttt{len} measure, the result of \texttt{implode} is no longer than the input'' is backed by a proved theorem. But \texttt{implode} returns a string, and \texttt{len} returns $0$ for strings, so every implementation that returns a string satisfies the statement. The statement that the Lean function \texttt{canon} maps equal inputs to equal outputs follows from equality substitution alone and holds for every function of the same type. Both statements paraphrase their theorems accurately, yet one conveys only a return type and the other only a general fact.

We call a statement's dependence on the target function's specific behavior \emph{attribution}. To check it, we hold the function's public interface (its input constraints and return type) and the background knowledge fixed and vary only the target's behavior. A statement that holds for the target is \emph{attributable} if it fails for at least one alternative implementation that satisfies the same interface; such an implementation, together with an input on which the statement fails, is a \emph{witness}. A statement that follows from the interface and background knowledge alone is \emph{non-attributable}: even when formally proved, it tells the reader nothing about the target's behavior beyond the interface.

Attribution thus complements truth. Truth confirms that a statement holds for the target; attribution confirms that it describes behavior beyond the interface. Since callers can already see the interface, a statement about the target's behavior should say more than the interface does, which makes attribution a basic requirement for such statements. It is one quality dimension, orthogonal and complementary to importance, completeness, and readability, which together determine whether documentation is useful to developers; we focus on attribution.

Attribution answers a question that no existing measure answers: does a statement tell the reader anything about behavior beyond the interface (Table~\ref{tab:conceptual-comparison})? It does so by measuring each statement against the function's public interface. Every alternative implementation must satisfy this interface, so the alternatives range over precisely the behaviors a caller cannot rule out without reading the function body. Existing methods also test a conclusion by changing something, but none of them bounds the change by the interface. Mutation-based evaluation of specifications~\citep{endres2024nl2postcond,amaral2025mutdafny} changes too little: it tries only a handful of edits to the function body, and these edits often leave the behavior a property describes untouched and need not satisfy the interface. Vacuity checking and semantic dependence~\citep{beer2001vacuity,podgurski1990dependence} change too much: they replace a part with an unconstrained value and thus break the declared interface as well, so a statement that merely restates the return type would be ``falsified'' even though it adds nothing beyond the interface. Anchoring the check to the interface also backs both verdicts with checkable evidence. Non-attribution is established by a proof that covers every implementation satisfying the interface, which no finite set of mutants can provide; attribution is confirmed by a witness that itself satisfies the interface, which readers can check for themselves.

\begin{table}[!htbp]
\centering
\tpTableSetup
\caption{Attribution compared with existing measures. \ding{51} guaranteed; \faAdjust depends on the selected variants; \ding{55} not guaranteed.}
\label{tab:conceptual-comparison}
\setlength{\tabcolsep}{4pt}
\begin{tabularx}{\linewidth}{@{}>{\raggedright\arraybackslash}p{0.25\linewidth}>{\raggedright\arraybackslash}p{0.13\linewidth}>{\raggedright\arraybackslash}p{0.18\linewidth}Y>{\centering\arraybackslash}p{0.10\linewidth}@{}}
\toprule
\textbf{Measure} & \textbf{Unit of decision} & \textbf{Reference} & \textbf{Question answered} & \textbf{Can reject Figure~\ref{fig:motivating}\newline statements} \\
\midrule
Provability~\citep{ma2025specgen} & Individual property & Original target & Can the property be proved? & \ding{55} \\
Mutation discrimination~\citep{endres2024nl2postcond,amaral2025mutdafny,goldweber2024ironspec} & Property or whole specification & Selected code mutants & Which tested implementations does it rule out? & \faAdjust \\
Vacuity and semantic dependence~\citep{beer2001vacuity,podgurski1990dependence,chockler2008causes} & Property or program part & Changes to a subformula or program part & Does the conclusion depend on the changed part? & \ding{55} \\
Document--code equivalence~\citep{eghbali2026documentary,zeng2025veriequivbench} & Entire document or specification & Overall behavior of the original target & Do the two describe the same behavior? & \ding{55} \\
Source support and consistency~\citep{rashkin2023measuring,panthaplackel2021deep,sun2023clover} & Sentence or comment & Source text or code & Is the statement supported and consistent? & \ding{55} \\
\midrule
\textbf{Attribution (ours)} & \textbf{Individual statement} & \textbf{Implementations under the fixed interface} & \textbf{Does it add behavioral information beyond the interface?} & \ding{51} \\
\bottomrule
\end{tabularx}
\end{table}

Turning attribution into a check and using it for documentation poses three challenges. First, non-attribution is a claim about every implementation that satisfies the interface, including implementations not definable in the logic. We reduce it to a single proof in an \emph{erased environment}, where the target is replaced by a fresh symbol constrained only by its interface (Lemma~\ref{lem:erasure}). Second, a witness must itself satisfy the interface, so arbitrary edits do not qualify. DocAuth builds alternative implementations from fixed templates over the function's arguments and has ACL2 prove that each one satisfies the interface. Third, documentation is prose, and rendering a property as a sentence can drop precisely the part that depended on the implementation. DocAuth therefore recovers a formal property from each rendered sentence, checks it again, and publishes only the sentences that pass, each with its recovered property and evidence.

On $150$ functions from the ACL2 Community Books, three LLMs produce proved but non-attributable properties for about a third of the functions. Attributable properties also catch changes their check never saw: under interface-preserving code changes that differ from the witness templates, $64.8\%$ of attributable properties detect a change, compared with $15.3\%$ of the proved properties for which no witness is found and none of the non-attributable ones. At the sentence level, DocAuth raises the share of published sentences confirmed as attributable from $82.9\%$, for a pipeline that checks truth alone, to $94.7\%$, and stays $9.3$ points ahead when the reassessment model cannot see the code. It also delivers $37\%$ more attributable sentences than a mutation-based filter. The two checks divide the work: rechecking the rendered sentences accounts for most of the rate gain, and selecting attributable candidates raises the number of attributable sentences.

\Needspace{4\baselineskip}
This paper makes three contributions:
\begin{itemize}[leftmargin=*]
 \item \textbf{A definition of attribution and a sound check for it.} Relative to the public interface, we separate what the interface already states from what an individual statement adds. An erasure proof establishes non-attribution for all conforming implementations, and an interface-conforming witness confirms attribution.
 \item \textbf{DocAuth, a documentation-generation method.} It applies the check to candidate properties and to the properties recovered from rendered sentences, and publishes each sentence with its formal property and evidence.
 \item \textbf{An empirical evaluation.} Non-attributable properties occur in about a third of the functions, and attributable properties detect interface-preserving changes that the others miss. Pooled over the three generator models, DocAuth delivers the most attributable sentences at the highest attribution rate among the methods compared.
\end{itemize}

\section{Attribution}
\label{sec:attribution}

Consider \texttt{str::base64-encode-list} from the ACL2 Community Books, one of our evaluation targets, which encodes a character list in Base64. Among the candidate properties an LLM proposed for it are these two:

\begin{lstlisting}[language=ACL2]
(character-listp (base64-encode-list a1))
(equal (base64-encode-list nil) nil)
\end{lstlisting}

ACL2 proves both, but only the second says something about this encoder. The first repeats the \texttt{:returns} declaration and holds for any implementation that returns a character list. Replacing the implementation exposes the difference: an implementation that always returns a one-character list still satisfies the interface, and the first property remains true while the second fails.

\subsection{Formal Definition}

\begin{definition}[Attribution, non-attribution, and witnesses within a search scope]
\label{def:attribution}
Let $I$ be the public interface of the target $f$, and let the background environment $B$ contain, in place of $f$, only a placeholder symbol constrained by $I$, with no fact about that symbol other than $I$. Let $\mathcal{C}(I)$ be the set of all functions that can be substituted for the placeholder and satisfy $I$, including functions not definable in ACL2; $f$ satisfies $I$, so $f\in\mathcal{C}(I)$. The property $\varphi$ is relativized to the input constraints of $I$; $\mathcal{V}_{\varphi}$ is the set of assignments to its free variables over the domain of discourse; and $\llbracket\varphi[\nu]\rrbracket_{g}\in\{\top,\bot\}$ denotes the truth value of $\varphi$ in $B$ when $g$ is substituted for the placeholder and $\nu$ for the free variables. If $\llbracket\varphi[\nu]\rrbracket_{f}=\top$ for every $\nu\in\mathcal{V}_{\varphi}$, we write $\Valid(f,\varphi)$, and we let
\[
 W(\varphi)=\bigl\{(g,\nu)\in\mathcal{C}(I)\times\mathcal{V}_{\varphi}
   \bigm|\llbracket\varphi[\nu]\rrbracket_{g}=\bot\bigr\}.
\]
Under $\Valid(f,\varphi)$, we call $\varphi$ \emph{attributable} if $W(\varphi)\neq\varnothing$ and \emph{non-attributable} if $W(\varphi)=\varnothing$. We call each $(g,\nu)\in W(\varphi)$ a \emph{witness}; by $\Valid(f,\varphi)$, every witness has $g\neq f$.
Within a finite search scope $\mathcal{S}\subseteq\mathcal{C}(I)\times\mathcal{V}_{\varphi}$, we \emph{confirm attribution} if $\mathcal{S}\cap W(\varphi)\neq\varnothing$; if $\mathcal{S}\cap W(\varphi)=\varnothing$ and $\varphi$ has not been established as non-attributable, we say that \emph{no witness is found} for $\varphi$ within that scope.
\end{definition}

A target proof and a single witness suffice to confirm attribution, but a finite search that finds no witness cannot establish non-attribution. Our checker uses $\mathcal{S}=A(I)\times X_{\varphi}$, where $A(I)\subseteq\mathcal{C}(I)$ and $X_{\varphi}\subseteq\mathcal{V}_{\varphi}$ are finite sets generated by fixed rules.

\subsection{Checking Method}
\label{sec:audit-outcomes}
The check is anchored to the interface $I$ published in the source code: proofs about the original function and the witness search over alternative implementations both assume the input constraints of $I$ (the \texttt{guard}).

The erased environment $E_f$ implements the background environment $B$ of Definition~\ref{def:attribution}. Rather than deleting the consequences of $f$ from the original environment, we build $E_f$ by loading three parts in order: the events that precede the definition of $f$ in its source file, a set of standard background libraries, and finally a fresh symbol $u_f$. The only constraint on $u_f$ is the return contract in $I$ (the declared return type); the input constraints become premises of the property. Because $u_f$ is introduced after every other event, no fact in $E_f$ other than the return contract mentions it, and no lemma needs to be filtered out. During checking, occurrences of $f$ in a property are rewritten to $u_f$. If the property also uses other functions that call $f$, their meaning still depends on $f$, so the property is left undecided in $E_f$ and recorded as unresolved.

\begin{lemma}[Soundness of erasure verdicts]
\label{lem:erasure}
Suppose that no function in $\varphi$ other than $f$ depends on $f$. If $\varphi[u_f/f]$ is provable in $E_f$, then $W(\varphi)=\varnothing$.
\end{lemma}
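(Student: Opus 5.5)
The plan is a soundness argument: every $g\in\mathcal{C}(I)$ yields a model of $E_f$ in which $u_f$ is interpreted as $g$. A proof of $\varphi[u_f/f]$ in $E_f$ then makes $\varphi$ true for $g$ at every assignment, so no witness exists.

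\textbf{Step 1: build a model from each conforming $g$.} Fix $g\in\mathcal{C}(I)$ and $\nu\in\mathcal{V}_{\varphi}$. I take the standard model of the events preceding $f$ and of the background libraries; ACL2's theory is consistent and has such a model. I extend it by interpreting the fresh symbol $u_f$ as $g$. The only axiom mentioning $u_f$ is the return contract in $I$, which $g$ satisfies by definition of $\mathcal{C}(I)$. No other axiom of $E_f$ mentions $u_f$, since $u_f$ is introduced last. Hence the extended structure $M_g$ is a model of $E_f$. This holds even when $g$ is not definable in ACL2: model-theoretically, a constrained fresh function symbol may be interpreted by any function meeting its constraint. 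This is the standard conservativity argument for \texttt{encapsulate}/\texttt{defstub}-style constraints.

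\textbf{Step 2: apply soundness.} Because $\varphi[u_f/f]$ is provable in $E_f$, soundness of ACL2's first-order logic with induction gives $M_g\models\varphi[u_f/f][\nu]$ for every $\nu$. The input constraints of $I$ appear as premises, matching the relativization in Definition~\ref{def:attribution}. Induction is sound here because the ordinal-based induction principle holds in the standard model regardless of how $u_f$ is interpreted. The constraint on $u_f$ is a universal axiom, so it is not an induction hypothesis that could fail.

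\textbf{Step 3: identify the two semantics.} I show that truth of $\varphi[u_f/f][\nu]$ in $M_g$ equals $\llbracket\varphi[\nu]\rrbracket_g$, the truth value in $B$ with $g$ substituted for the placeholder. This is where the hypothesis that no function in $\varphi$ other than $f$ depends on $f$ is used. Every other function symbol in $\varphi$ has an interpretation fixed by the preceding events and libraries, independent of the placeholder. So replacing the occurrences of $f$ by $u_f$ and interpreting $u_f$ as $g$ is exactly the substitution of $g$ for the placeholder. I would argue this by structural induction on terms of $\varphi$. The base cases are variables and constants. The $f$ case yields $g$. For any other symbol $h$, its meaning is unchanged because $h$ does not depend on $f$.

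Combining the steps, $\llbracket\varphi[\nu]\rrbracket_g=\top$ for all $(g,\nu)\in\mathcal{C}(I)\times\mathcal{V}_{\varphi}$, so $W(\varphi)=\varnothing$.

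I expect Step 1 to be the main obstacle. It needs a precise argument that the standard libraries and preceding events admit a model. It also needs the claim that adding a constrained fresh symbol places no restriction on its interpretation beyond $I$, including for non-definable $g$. I would first cite ACL2's conservativity and evaluation theorems for encapsulation (Kaufmann--Moore). If those do not suffice, I would fall back to arguing that the definitional principle adds only axioms that are conservative over the standard model.
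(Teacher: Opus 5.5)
Your proposal is correct and follows the paper's proof sketch essentially step for step: interpret $u_f$ as $g$ in the standard model, note that the return contract is the only fact mentioning $u_f$, apply soundness, and use the no-dependence hypothesis to read the result back as $\llbracket\varphi[\nu]\rrbracket_g=\top$. You also check explicitly that the induction instances over $u_f$ still hold in the standard model, a detail the sketch leaves implicit. Your direct model-expansion argument already covers that case, so you do not need ACL2's conservativity theorems, which are proof-theoretic and answer a different question.
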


\noindent\emph{Proof sketch.} Every event in $E_f$ is a definition or has been proved by ACL2, so it holds in the standard model. Take any $g\in\mathcal{C}(I)$: the only fact in $E_f$ that mentions $u_f$ is the return contract, which $g$ satisfies, so interpreting $u_f$ as $g$ keeps every fact of $E_f$ true. By soundness, $\varphi[u_f/f]$ is true under this interpretation; by the premise, $\varphi$ involves $f$ only at the rewritten positions, so $\llbracket\varphi[\nu]\rrbracket_{g}=\top$ for every $\nu\in\mathcal{V}_{\varphi}$. When $g$ is definable in ACL2, this argument coincides with ACL2's functional instantiation rule~\citep{kaufmann2001structured}. $\square$

Checking starts in $E_f$; properties not proved there move on to the original environment $T$, which consists of the same background libraries plus the library containing $f$. A property refuted by a counterexample in $T$ is recorded as false for the target and receives no attribution verdict. Every other property whose check completes receives exactly one of the following outcomes.

\begin{itemize}[leftmargin=*]
\item \textbf{\ding{182}\;Attributable.} $\varphi$ is proved in $T$, and $A(I)\times X_{\varphi}$ contains an alternative implementation $g$ satisfying the interface and an assignment $\nu$ that together falsify $\varphi$. The target proof and the witness form the \emph{attribution evidence}.
\item \textbf{\ding{183}\;Non-attributable.} $\varphi$ remains provable in $E_f$; that is, $I$ and the background already entail $\varphi$. By Lemma~\ref{lem:erasure}, $W(\varphi)=\varnothing$, and since $f\in\mathcal{C}(I)$, $\Valid(f,\varphi)$ follows without a proof in $T$.
\item \textbf{\ding{184}\;No witness found.} $\varphi$ is not proved in $E_f$ but is proved in $T$, and $A(I)\times X_{\varphi}$, searched in full, contains no witness.
\end{itemize}

If a dependency is unavailable, a proof or evaluation does not complete, or no sampled input satisfies the property's premises, the outcome is \emph{unresolved}. Figure~\ref{fig:erasure} shows how the two properties above receive outcomes \ding{183} and \ding{182}, respectively.

\begin{figure}[!htbp]
\footnotesize
\lstset{basicstyle=\ttfamily\scriptsize,frame=none,aboveskip=1pt,belowskip=1pt,
        xleftmargin=2pt}
\hrule\vspace{3pt}
{\raggedright
\textbf{Erased environment.} A fresh constrained symbol replaces the target
\texttt{str::base64-encode-list}; the symbol carries only the target's public interface
\texttt{:returns (x-enc character-listp)}:
\par}
\begin{lstlisting}[language=ACL2]
(encapsulate (((u-56c775 *) => *))
  (local (defun u-56c775 (a1) (declare (ignore a1)) nil))
  (defthm u-56c775-contract
    (character-listp (u-56c775 a1)) :rule-classes :rewrite))
\end{lstlisting}
\vspace{2pt}
{\raggedright
\textbf{Outcomes.} Both properties are rewritten over this symbol, with the input constraint as a premise. The first is still provable, because the
\texttt{:returns} declaration entails it (\textbf{non-attributable}); the second is not, and a witness that satisfies the interface falsifies it (\textbf{attributable}).
\par}
\begin{lstlisting}[language=ACL2]
(implies (character-listp a1)
         (character-listp (u-56c775 a1)))   ; after erasure: proved
(implies (character-listp nil)
         (equal (u-56c775 nil) nil))        ; after erasure: not proved
(defun g1-21f79 (a1) '(#\a))                ; witness: satisfies the interface and falsifies the property
\end{lstlisting}
\vspace{3pt}\hrule
\caption{The erased environment and the attribution check for the running example}
\Description{An ACL2 code example in two parts. The first part replaces the target encoding function with a constrained symbol carrying only a character-list return contract. The second part rewrites two candidate properties over this symbol: the property restating the return contract remains provable after erasure, showing that the interface already entails it, and is classified as non-attributable; the property fixing the result on empty input is falsified by a constant implementation satisfying the same interface and is classified as attributable.}
\label{fig:erasure}
\end{figure}

\section{DocAuth}
\label{sec:docauth}

Attribution is decided on formal properties, but developers read natural-language sentences. DocAuth therefore selects documentation content at the formal level, renders it as sentences, and decides attribution again for what each sentence actually says (Figure~\ref{fig:docauth-pipeline}). The workflow has three steps. An LLM proposes candidate properties $\varphi$ for the target function, and the candidate check keeps only the attributable ones. Each kept property is rendered as a sentence $s$, from which DocAuth recovers a formal property $\psi$; $\psi$ is checked again, and $s$ is published only if $\psi$ is attributable. Both checks run Algorithm~\ref{alg:docauth}, with $\varphi$ and $\psi$, respectively, as the property of Definition~\ref{def:attribution}.

The two checks play different roles. The candidate check decides what to write, so that the limited sentence budget is spent on attributable content. The sentence recheck decides what to publish, because readers read sentences, not properties. Both checks let through only content confirmed as attributable; nothing is published when the outcome is unresolved or no witness is found. Every published sentence comes with the property recovered from it and that property's attribution evidence, which readers can check for themselves.

\begin{figure}[!htb]
  \centering
  \includegraphics[width=0.98\textwidth]{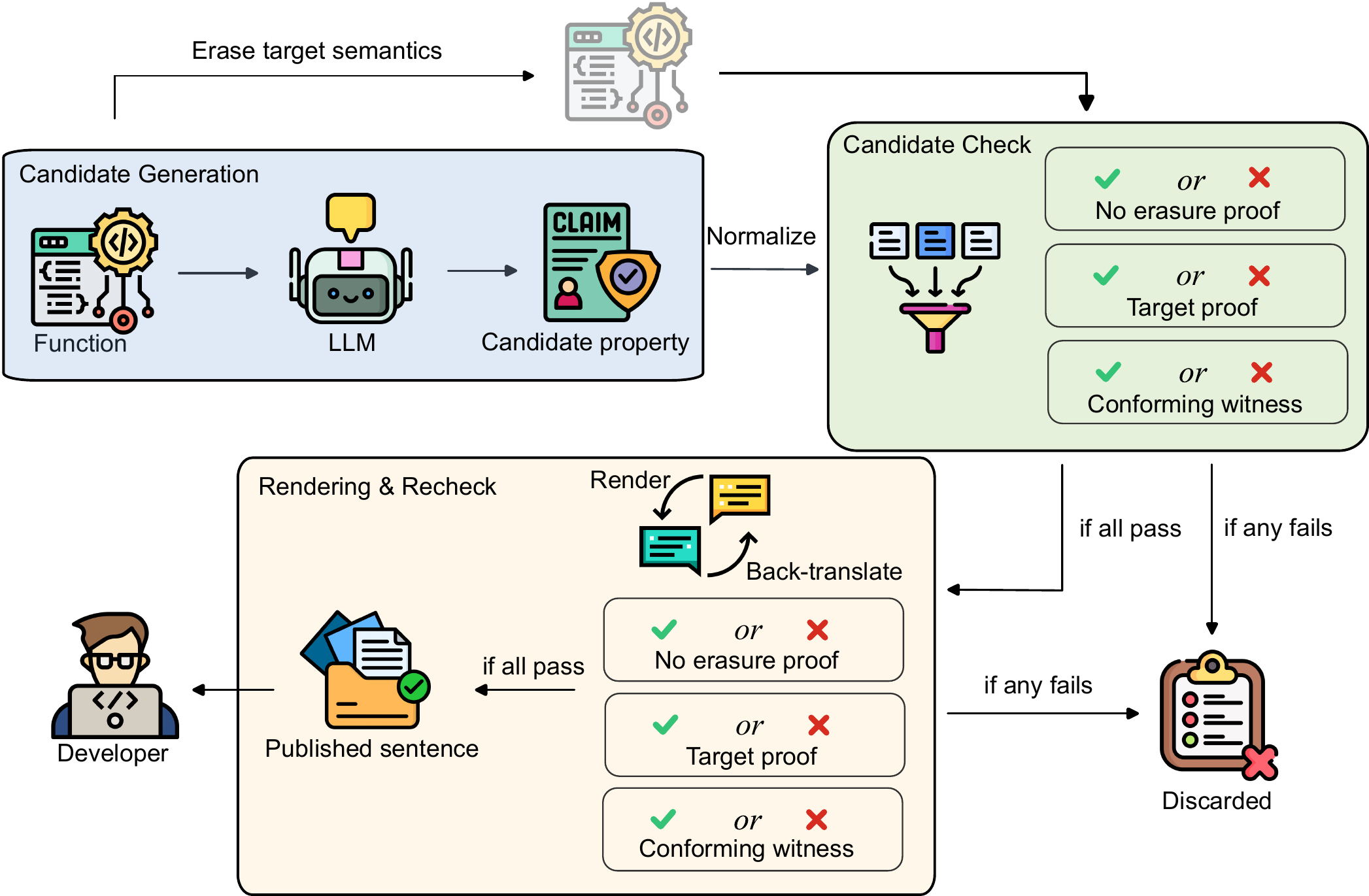}
  \caption{The DocAuth workflow}
  \Description{The workflow has three stages. During candidate generation, an LLM reads the formal target function and proposes candidate properties. The candidate check first identifies non-attributable properties through an erasure proof, then obtains a target proof and an interface-conforming witness for the remaining properties; these two items form the attribution evidence. Properties that pass the check are rendered as natural-language sentences, from which formal properties are recovered by back-translation and subjected to the sentence recheck; only sentences that pass both checks are published to developers, and everything else is discarded.}
  \label{fig:docauth-pipeline}
\end{figure}

\subsection{Candidate Generation}

Candidate generation is an ordinary property-generation step. The model sees the formal source, including the complete function body, so that it can summarize implemented behavior worth documenting. The prompt asks for the shape of the result, relationships between the result and the inputs, boundary cases, and behavior under composition with standard functions; it mentions neither attribution nor erasure (Figure~\ref{fig:prompt}). All content filtering is left to the later checks, so the generator can be swapped for another model or property-generation method. Identifiers defined in the corpus are replaced with opaque names so that the model describes behavior from the definitions rather than from names; ACL2 built-in vocabulary is kept. The model is queried once per function for as many candidates as the candidate budget specifies. Responses are normalized to single ACL2 terms (stripping wrappers such as \texttt{defthm}, removing duplicates, and restoring the original names) and stored in the order the model returns them.

\begin{figure}[!htb]
\footnotesize
\hrule\vspace{3pt}
\textbf{Propose (formal source $\to$ candidate property $\varphi$).}
You are preparing formal evidence for natural-language documentation of a function in an ACL2 library so that other users can understand its behavior without reading its definition.
Output only a numbered list with one ACL2 term per line; do not include explanatory text, \texttt{defthm} wrappers, or code fences.
Each line must be a well-formed ACL2 term about the function shown, using only that function and standard ACL2 primitives.
Cover the shape of the result, relationships between the result and inputs, boundary cases, and behavior when composed with standard functions.
\vspace{3pt}\hrule
\caption{Prompt for candidate-property generation}
\Description{The candidate-generation system prompt in plain text. It asks the model to propose well-formed ACL2 property terms for the function shown, covering the shape of the result,
relationships between the result and inputs, boundary cases, and behavior when composed with standard functions.}
\label{fig:prompt}
\end{figure}

\subsection{Implementing the Check}
\label{sec:audit}

Before generating candidates, DocAuth extracts the target's explicit \texttt{guard} and any separately declared return contract from the source as the interface $I$ and freezes it. The reference is thus the interface that the library's authors have already published, not one chosen by DocAuth or the model. The erased environment $E_f$ and the original environment $T$ are built as in Section~\ref{sec:audit-outcomes}, and all checks for a given target share the same interface and environments. Checked properties are not registered as proof rules, so checks of different properties do not affect one another. Algorithm~\ref{alg:docauth} gives the complete procedure, where $[g/f]$ replaces the target $f$ with $g$ and $[\nu]$ substitutes the concrete inputs $\nu$.

\begin{algorithm}[!htb]
\caption{Determining whether a property is attributable}
\label{alg:docauth}
\footnotesize
\begin{algorithmic}[1]
\Require Original environment $T$, target $f$, property $\varphi$, fixed alternative templates and input-generation rules
\Ensure Attribution evidence (target proof and witness), or a reason to withhold publication
\State $I\gets$ public interface of $f$
\State $E_f\gets$ source events preceding the target definition and standard background libraries
\State In $E_f$, replace $f$ with a fresh symbol $u_f$ constrained only by $I$
  \Comment{Lemma~\ref{lem:erasure}}
\State $\widehat\varphi\gets$ $\varphi$ with the interface input constraints added
  \Comment{Relativized form in Definition~\ref{def:attribution}}
\If{$\widehat\varphi[u_f/f]$ is proved in $E_f$}
  \State \Return \ding{183} Non-attributable \Comment{Entailed by the interface and background}
\EndIf
\State $c\gets$ proof of $\widehat\varphi$ in $T$; if no proof is obtained, return false for the target or unresolved
\State Construct alternative implementations $A(I)$ and input assignments $X_\varphi$ using the fixed rules
\ForAll{$g\in A(I)$ such that ACL2 accepts $g$ and proves that it satisfies $I$}
  \If{some $\nu$ in $X_\varphi$ satisfies the premises and makes $\widehat\varphi[g/f][\nu]=\bot$}
    \State \Return $(c,(g,\nu))$ \Comment{True for the original target, false for an alternative}
  \EndIf
\EndFor
\If{evaluations are incomplete or no sampled input satisfies the premises}
  \State \Return unresolved
\EndIf
\State \Return \ding{184} No witness found \Comment{Search within scope completed without a witness}
\end{algorithmic}
\end{algorithm}

The alternative implementation in a witness must satisfy the interface; otherwise it would falsify even a property that merely restates the return type. DocAuth builds alternative implementations from the interface rather than from the function body, using two kinds of templates over the function's arguments: one returns a fixed constant, and the other builds the result from the arguments by returning an argument, transforming one, or swapping argument positions. The templates depend only on the function's signature, not on its body, and the witnesses they yield are short enough for readers to inspect directly, as with the constant implementation in Figure~\ref{fig:erasure}. A template instance enters the search only after ACL2 admits it and proves that it satisfies $I$. The system takes the first $16$ instances in a fixed order and keeps those that pass as $A(I)$. The assignments $X_{\varphi}$ are generated from variable types and property premises, and the search stops at the first witness. The full templates, search order, and input-generation rules are in the replication package.

\subsection{Rendering and Rechecking}

The candidate check confirms only $\varphi$, but what is published is the sentence rendered from $\varphi$. Rendering can change a property's meaning or scope, and the sentence need not preserve the implementation-dependent part of $\varphi$, so DocAuth checks the sentence itself. Properties that pass the candidate check are taken in stored order, up to the sentence budget, and each is rendered as one sentence $s$. A separate model call then back-translates $s$ into a formal property $\psi$, given $s$ and the target's definition; because this call never sees $\varphi$, $\psi$ reflects only what the sentence says. The recovered property $\psi$ then goes through the full check of Algorithm~\ref{alg:docauth}: if it is attributable, the sentence is published with $\psi$ and its evidence; otherwise the sentence is discarded. The recheck does not require $\psi$ to equal $\varphi$: readers rely on the sentence, and the evidence delivered with it concerns $\psi$, the property the sentence expresses. The rendering and back-translation prompts are in the replication package.

For the encoder of Section~\ref{sec:attribution}, the property \texttt{(equal (base64-encode-list nil) nil)} can be rendered as ``Encoding an empty character list yields an empty character list.'' When the same property is recovered from this sentence, the recheck again obtains a target proof and the one-character-list witness, and DocAuth delivers the sentence with the recovered property and its evidence. If instead the rendering says only ``The encoded result is a character list,'' the recovered property merely restates the return type and remains provable in the erased environment, so the recheck classifies it as non-attributable and the sentence is not published.

\section{Evaluation}
\label{sec:study}

\subsection{Experimental Setup}
\label{sec:setup}
From the $2{,}054$ functions in $11$ subcorpora of the ACL2 Community Books, we select $162$ functions that have public, fully extractable definitions, take at least one argument, do not access state, support safe renaming and erasure checking, and have at least three proved library theorems. Before generation, we order them pseudo-randomly by a hash of their names and take the first $150$ as targets.

Properties are generated by \texttt{qwen2.5-coder:7b}, \texttt{deepseek-v4-flash}, and \texttt{gpt-5.6-terra}, hereafter Qwen2.5-7B, DeepSeek-v4-flash, and GPT-5.6-terra. All three models use the same targets, prompts, and checking configuration, run at temperature $0$, and are queried once per function for $k=15$ properties. The human-written group consists of the proved theorems that library authors wrote for the same functions, checked under the same configuration.

In the sentence experiment, the three models serve only as sentence writers, and all candidate properties come from the GPT-5.6-terra pool. Each publication path writes at most $4$ sentences per function, and a function with nothing published counts as zero. Back-translation during publication and LLM judging both use GPT-5.6-terra.

Properties were generated on 25--27 August 2026 (UTC); the sentence experiment and timing measurements ran on 10--25 September. Qwen2.5-7B ran locally under Ollama (digest \texttt{dae161e27b0e}); the other models were accessed by model name. On 10 September, DeepSeek replaced V4-Flash with V4.1-Flash, so its properties come from V4-Flash and its sentences from V4.1-Flash, both labelled DeepSeek-v4-flash.

Our main measure is the \emph{attribution rate}, the share of published sentences that a cross-model reassessment confirms as attributable. Throughout, counts of attributable sentences refer to sentences confirmed by this reassessment. The reassessment uses \texttt{claude-sonnet-5}, which plays no part in publication: it recovers formal properties afresh from the opaque definitions and the published sentences and checks them with Algorithm~\ref{alg:docauth}; failed back-translations and unresolved checks count as not confirmed. All 95\% intervals are percentile bootstrap intervals~\citep{efron1993bootstrap} over $10{,}000$ resamples of target functions, which keep each function's observations together~\citep{cameron2008bootstrap}. Differences between rates are computed from the rates as printed, to one decimal place, whereas intervals are computed from unrounded data.

RQ1, RQ2, and RQ5 analyze the properties from the four sources, RQ3 analyzes the sentence experiment, and RQ4 reuses truncated records from both. The five research questions are:
\begin{itemize}[leftmargin=*]
\item \textbf{RQ1 (Prevalence):} How Common Are Non-Attributable Properties?
\item \textbf{RQ2 (Criterion Validity):} Do Attributable Properties Detect Interface-Preserving Implementation Changes?
\item \textbf{RQ3 (Method Effectiveness):} Does DocAuth Improve the Attribution of Generated Documentation?
\item \textbf{RQ4 (Parameter Effects):} How Do Budget Settings Affect the Yield of Attributable Content?
\item \textbf{RQ5 (Coverage and Cost):} How Many Functions Are Covered, and at What Cost?
\end{itemize}

\subsection{RQ1 (Prevalence): How Common Are Non-Attributable Properties?}
All three generator models produce non-attributable properties, and these properties occur in about a third of the functions: among functions with at least one completed check, $35.0\%$, $32.3\%$, and $29.9\%$ have at least one non-attributable property for Qwen2.5-7B, DeepSeek-v4-flash, and GPT-5.6-terra, respectively. Table~\ref{tab:study} gives the check outcomes for each source. Percentages are relative to properties whose check completed; unresolved checks are analyzed in RQ5.

\begin{table}[!htbp]
\centering
\tpTableSetup
\caption{Outcomes of the properties checked on $150$ functions}
\label{tab:study}
\begin{tabular}{@{}lrrrr@{}}
\toprule
 & Qwen2.5-7B & DeepSeek-v4-flash & GPT-5.6-terra & Human-written \\
\midrule
Properties checked & 1{,}743 & 2{,}030 & 2{,}090 & 1{,}068 \\
Checks completed & 939 & 1{,}215 & 1{,}371 & 466 \\
\ding{182}\,Attributable & 461 (49.1\%) & 921 (75.8\%) & 1{,}209 (88.2\%) & 409 (87.8\%) \\
\midrule
False for the target & 373 (39.7\%) & 143 (11.8\%) & 27 (2.0\%) & 0 (0.0\%) \\
\ding{183}\,Non-attributable & 80 (8.5\%) & 77 (6.3\%) & 72 (5.3\%) & 20 (4.3\%) \\
\ding{184}\,No witness found & 25 (2.7\%) & 74 (6.1\%) & 63 (4.6\%) & 37 (7.9\%) \\

\bottomrule
\end{tabular}
\end{table}

Stronger models mainly reduce errors, not non-attribution: from Qwen2.5-7B to GPT-5.6-terra, the share of properties false for the target falls from $39.7\%$ to $2.0\%$, whereas the share of non-attributable ones falls only from $8.5\%$ to $5.3\%$. Identifying non-attributable properties therefore takes a check that goes beyond truth.

Of the proved theorems written by library authors, $87.8\%$ are attributable, on par with GPT-5.6-terra. Nearly all of the $20$ non-attributable ones are lemmas that serve the prover, such as \texttt{(equal (perm (true-list-fix x) y) (perm x y))}, rather than documentation for readers.

\subsection{RQ2 (Criterion Validity): Do Attributable Properties Detect Interface-Preserving Implementation Changes?}
\label{sec:mutation}
Most attributable properties ($64.8\%$) detect interface-preserving implementation changes that played no role in their check, compared with only $15.3\%$ of the properties that also hold for the target but for which no witness was found. Non-attributable properties detect none ($0/48$). This follows from what non-attribution means: such properties are implied by the interface alone, hold for every implementation that preserves it, and carry no information about the implementation. The detection rate of attributable properties, by contrast, is not fixed by the definition.

We apply single-point edits to the function definitions using common mutation-testing operators~\citep{jia2011mutation}: accessor or comparison replacement, argument swaps, constant changes, and conjunct deletion. These edits differ from the witness templates (Section~\ref{sec:audit}). We keep only edits that ACL2 accepts, that change behavior on sampled inputs, and that satisfy the declared return contract; $37$ of the $67$ functions with parseable definitions have at least one usable edit. A property detects an edit if it evaluates to false on some sampled input, and rates are computed over properties with at least one usable edit.

\begin{table}[!htb]
\centering
\tpTableSetup
\caption{Change detection by property source and check outcome}
\label{tab:mutation}
\begin{tabular}{@{}lrrr@{}}
\toprule
Property source & Attributable & No witness found & Non-attributable \\
\midrule
Qwen2.5-7B        &  68/123 (55.3\,\%) & 1/6 (16.7\,\%)  & 0/16 \\
DeepSeek-v4-flash & 204/304 (67.1\,\%) & 4/21 (19.0\,\%) & 0/14 \\
GPT-5.6-terra     & 264/401 (65.8\,\%) & 4/19 (21.1\,\%) & 0/10 \\
Human-written     &  78/120 (65.0\,\%) & 0/13 (0.0\,\%)  & 0/8 \\
\midrule
Total             & 614/948 (64.8\,\%) & 9/59 (15.3\,\%) & 0/48 \\
\bottomrule
\end{tabular}
\end{table}

Pooled over the four groups, $614$ of $948$ attributable properties detect at least one change, with per-group rates ranging from $55.3\%$ to $67.1\%$ (Table~\ref{tab:mutation}). The pooled rate is $49.5$ percentage points above that of no-witness properties (95\% CI $[+31.9,+63.8]$), and the gap is still $47.9$ points on the $18$ functions that have properties of both kinds ($63.7\%$ versus $15.8\%$, 95\% CI $[+25.5,+66.7]$). The $9$ no-witness properties that do detect a change are in fact attributable: the edits that detect them are witnesses outside the search scope. Filtering by attribution sacrifices almost no detection power: of the $507$ property--change failures among GPT-5.6-terra's proved properties, $503$ ($99.2\%$) involve attributable properties.

\FloatBarrier
\subsection{RQ3 (Method Effectiveness): Does DocAuth Improve the Attribution of Generated Documentation?}
\label{sec:sentences}
Pooled over the three generator models, \tool achieves the highest attribution rate of the six methods under all three reassessment conditions and delivers the most attributable sentences. Its edge over truth checking comes mainly from the sentence recheck, while its edge over mutation filtering lies mainly in yield.

We compare six methods on the same functions and generator models, built around three main paths. The \emph{baseline} writes sentences directly from the opaque source code without any check (Figure~\ref{fig:baseline-prompt}). \emph{Truth checking} renders properties proved to hold for the target, and \tool renders attributable properties. These two back-translate every sentence in the same way and differ only in the criterion used for candidate selection and sentence rechecking.

Three variants complete the comparison. \emph{LLM judging} is added after the baseline or truth checking: a judge reads the definition and the sentence and decides whether the sentence describes behavior specific to the target. \emph{Mutation filtering} follows nl2postcond~\citep{endres2024nl2postcond} and MutDafny~\citep{amaral2025mutdafny}: at both candidate selection and sentence rechecking in the truth-checking path, it keeps only properties falsified by at least one mutant. The mutants are all single-point edits of the function body under the RQ2 operators and, like the witness templates, must be proved by ACL2 to satisfy the return contract. Mutation filtering and \tool thus differ mainly in whether alternative implementations come from the function body or from the interface.

\begin{figure}[!htb]
\footnotesize
\hrule\vspace{3pt}
\textbf{Baseline (formal source $\to$ sentence, without intermediate properties).}
You are documenting a function in an ACL2 library so that other developers can understand what it does without reading its implementation.
Write short, factual English sentences describing its behavior. Each sentence should state one point about the function's inputs and result:
the shape of the result, the relationship between the result and inputs, or what happens in boundary cases.
Describe what it computes, not how it is implemented; do not name other library functions or describe internal steps or helper routines.
Output only a numbered list with one sentence per line.
\vspace{3pt}\hrule
\caption{Prompt for direct documentation generation in the baseline}
\Description{The baseline documentation prompt in plain text, requesting short, factual English sentences that describe the function's behavior, stating the shape of the result, the relationship between the result and inputs, or what happens in boundary cases, without mentioning other library functions or internal steps.}
\label{fig:baseline-prompt}
\end{figure}

\paragraph{Overall effect} \tool has the highest cross-model attribution rate of the six methods for every generator model (Table~\ref{tab:independent}, Figure~\ref{fig:sentences}). Pooled over the three models, it reaches $94.7\%$, which is $11.8$ percentage points above truth checking (95\% CI $[+8.5,+15.3]$) and $45.0$ points above the baseline. It publishes fewer sentences than truth checking ($1{,}238$ versus $1{,}301$) yet delivers $94$ more attributable ones ($1{,}172$ versus $1{,}078$, 95\% CI $[+55,+136]$). LLM judging helps little: it raises the baseline by $4.2$ points (95\% CI $[+2.9,+5.5]$) and truth checking by a non-significant $1.6$ points (95\% CI $[-0.1,+3.8]$).

\begin{table}[!b]
\centering
\tpTableSetup
\setlength{\tabcolsep}{3.7pt}
\caption{Attributable sentences and attribution rates by generator model and reassessment condition}
\label{tab:independent}
\begin{tabular}{@{}llrrrrr@{}}
\toprule
Generator model & Method & Published & Attributable & Cross-model & Interface-only & \shortstack{Interface-only,\\equivalent to $\psi$} \\
\midrule
\multirow{6}{*}{Qwen2.5-7B} & Baseline & 575 & 247 & 43.0\% & 26.1\% & 20.2\% \\
 & \quad +\,LLM judge & 404 & 217 & 53.7\% & 34.2\% & 26.5\% \\
 & Truth check & 376 & 299 & 79.5\% & 62.5\% & 43.9\% \\
 & \quad +\,LLM judge & 359 & 290 & 80.8\% & 63.5\% & 44.8\% \\
 & \quad +\,mutation filter & 265 & 229 & 86.4\% & 64.5\% & 44.2\% \\
 & \tool & 359 & \textbf{317} & \textbf{88.3\%} & \textbf{68.0\%} & \textbf{46.8\%} \\
\midrule
\multirow{6}{*}{DeepSeek-v4-flash} & Baseline & 577 & 303 & 52.5\% & 37.1\% & 28.9\% \\
 & \quad +\,LLM judge & 454 & 253 & 55.7\% & 38.3\% & 29.1\% \\
 & Truth check & 458 & 384 & 83.8\% & 67.5\% & 48.3\% \\
 & \quad +\,LLM judge & 382 & 328 & 85.9\% & 66.8\% & 44.8\% \\
 & \quad +\,mutation filter & 320 & 303 & 94.7\% & 75.0\% & \textbf{54.4\%} \\
 & \tool & 432 & \textbf{417} & \textbf{96.5\%} & \textbf{77.3\%} & 52.8\% \\
\midrule
\multirow{6}{*}{GPT-5.6-terra} & Baseline & 572 & 307 & 53.7\% & 39.2\% & 29.2\% \\
 & \quad +\,LLM judge & 511 & 268 & 52.4\% & 36.6\% & 25.6\% \\
 & Truth check & 467 & 395 & 84.6\% & 68.5\% & 46.3\% \\
 & \quad +\,LLM judge & 372 & 323 & 86.8\% & 67.5\% & 40.9\% \\
 & \quad +\,mutation filter & 342 & 321 & 93.9\% & 75.1\% & 48.8\% \\
 & \tool & 447 & \textbf{438} & \textbf{98.0\%} & \textbf{80.3\%} & \textbf{57.3\%} \\
\bottomrule
\end{tabular}
\end{table}

\begin{figure}[!htb]
\centering
\begin{tikzpicture}
\begin{axis}[
  tpFig, scale only axis, width=0.80\textwidth, height=4.0cm,
  ybar, bar width=8pt,
  ymin=0, ymax=100, ytick={0,25,50,75,100},
  ylabel={Attribution rate of published sentences (\%)},
  symbolic x coords={c1,c2,c3}, xtick=data,
  xticklabels={Cross-model,Interface-only,{Interface-only, equivalent to $\psi$}},
  enlarge x limits=0.22,
  axis x line*=bottom,
  legend style={legend columns=-1, at={(0.5,-0.22)}, anchor=north,
                /tikz/every even column/.append style={column sep=4pt}},
  legend image code/.code={\draw[#1,draw=none] (0cm,-0.085cm) rectangle (0.30cm,0.105cm);},
  error bars/error bar style={draw=black!55, line width=.5pt},
  error bars/error mark options={mark size=1.2pt, draw=black!55},
]
\addplot[fill=black!22,draw=none,error bars/.cd,y dir=both,y explicit] coordinates
  {(c1,49.7) += (0,3.9) -= (0,4.0) (c2,34.1) += (0,3.3) -= (0,3.2) (c3,26.1) += (0,3.0) -= (0,2.9)};
\addlegendentry{Baseline}
\addplot[fill=black!42,draw=none,error bars/.cd,y dir=both,y explicit] coordinates
  {(c1,53.9) += (0,4.6) -= (0,4.6) (c2,36.4) += (0,3.9) -= (0,3.9) (c3,27.0) += (0,3.6) -= (0,3.5)};
\addlegendentry{Baseline$+$judge}
\addplot[fill=tpHuman!55,draw=none,error bars/.cd,y dir=both,y explicit] coordinates
  {(c1,82.9) += (0,3.3) -= (0,3.4) (c2,66.4) += (0,4.3) -= (0,4.4) (c3,46.3) += (0,4.1) -= (0,4.0)};
\addlegendentry{Truth check}
\addplot[fill=tpHuman,draw=none,error bars/.cd,y dir=both,y explicit] coordinates
  {(c1,84.5) += (0,2.7) -= (0,2.9) (c2,65.9) += (0,4.2) -= (0,4.2) (c3,43.5) += (0,4.3) -= (0,4.2)};
\addlegendentry{Truth check$+$judge}
\addplot[fill=tpOk!70,draw=none,error bars/.cd,y dir=both,y explicit] coordinates
  {(c1,92.0) += (0,2.1) -= (0,2.1) (c2,72.1) += (0,4.2) -= (0,4.3) (c3,49.4) += (0,4.8) -= (0,4.8)};
\addlegendentry{Truth check$+$mutation filter}
\addplot[fill=tpAttr,draw=none,error bars/.cd,y dir=both,y explicit] coordinates
  {(c1,94.7) += (0,1.4) -= (0,1.5) (c2,75.7) += (0,3.9) -= (0,4.0) (c3,52.7) += (0,4.1) -= (0,4.1)};
\addlegendentry{\tool}
\end{axis}
\end{tikzpicture}
\caption{Sentence attribution rates, pooled over the three models}
\Description{Grouped bar chart. The horizontal axis shows the three reassessment conditions: cross-model, interface-only, and interface-only with equivalence to psi. Each group has six bars, for the baseline, the baseline with LLM judging, truth checking, truth checking with LLM judging, truth checking with mutation filtering, and \tool; error bars are 95\% bootstrap intervals clustered by target function. \tool's bar is the tallest in all three groups, at 94.7\%, 75.7\%, and 52.7\%, respectively. Table~\ref{tab:independent} gives the values for each generator model.}
\label{fig:sentences}
\end{figure}

\paragraph{Roles of the two checks} The sentence recheck raises the attribution rate, whereas the candidate check raises the yield. Switching only the recheck criterion of the truth-checking path to attribution lifts its rate from $82.9\%$ to $93.9\%$ ($+11.0$ points, 95\% CI $[+7.9,+14.4]$); $7.4$ points of this gain (95\% CI $[+4.5,+10.6]$) come from removing non-attributable and no-witness sentences. The same recheck also lifts the baseline from $49.7\%$ to $86.7\%$. Under the same recheck, \tool and this truth-checking path have similar rates ($94.7\%$ versus $93.9\%$, 95\% CI $[-0.8,+2.3]$), but \tool delivers $103$ more attributable sentences (95\% CI $[+63,+145]$): sentences written from attributable properties pass the recheck more often ($87.7\%$, versus $78.4\%$ for truth checking and $50.1\%$ for the baseline).

\paragraph{Comparison with mutation filtering} Mutation filtering comes close to the attribution rate of \tool but delivers far fewer attributable sentences. Its rate is $92.0\%$, $9.1$ points above truth checking (95\% CI $[+5.8,+12.7]$), and \tool is another $2.7$ points higher (95\% CI $[+0.6,+4.8]$). Mutation filtering, however, publishes only $927$ sentences, $853$ of them attributable, so \tool delivers $319$ more (95\% CI $[+236,+404]$). The gap arises in candidate selection. The mutants falsify no non-attributable property, but they also falsify only $60.4\%$ of the attributable ones, so mutation filtering selects $374$ properties (\tool: $481$) and none at all for $52$ functions (\tool: $27$). A single-point edit often leaves the behavior a property describes untouched, whereas alternative implementations built from the interface are not limited by the function body.

\paragraph{Per-function delivery} Each of the three generator models documents each function once, giving $450$ documents. \tool delivers at least one attributable sentence in $80.9\%$ of them, on par with the baseline's $83.3\%$ (95\% CI of the difference $[-7.1,+2.0]$). It delivers four attributable sentences in $40.9\%$ of documents, versus $8.0\%$ for the baseline, and $2.60$ per document on average, compared with $2.40$ for truth checking and $1.90$ for the baseline (Table~\ref{tab:delivery}). The $18.7\%$ of documents in which \tool publishes nothing stem mainly from the $27$ functions whose property pool contains no attributable property.

\begin{table}[!htb]
\centering
\tpTableSetup
\caption{Per-function delivery}
\label{tab:delivery}
\begin{tabular}{@{}lrrrrrr@{}}
\toprule
Method & \shortstack{Nothing\\published} & \shortstack{${\geq}1$\\attributable} & \shortstack{${\geq}2$\\attributable} & \shortstack{$4$\\attributable} & \shortstack{Attributable\\per document} & \shortstack{Attribution\\rate} \\
\midrule
Baseline & 0.7\% & 83.3\% & 65.8\% & 8.0\% & 1.90 & 49.7\% \\
Truth check & 16.9\% & 79.3\% & 72.0\% & 30.9\% & 2.40 & 82.9\% \\
\quad +\,mutation filter & 36.4\% & 62.4\% & 55.6\% & 26.4\% & 1.90 & 92.0\% \\
\tool & 18.7\% & 80.9\% & 74.7\% & 40.9\% & 2.60 & 94.7\% \\
\bottomrule
\end{tabular}
\end{table}

\paragraph{Composition of published content} The attribution criterion tilts published content only slightly toward point and boundary facts, and this shift explains almost none of the rate gain. We classify each published sentence, by the syntax of the $\psi$ delivered with it, as a point fact, a boundary case (an argument takes a base value, or a premise restricts it to its base case), a type or shape fact, or a general relation (rules in the replication package). Point and boundary facts make up $38.4\%$ of \tool's sentences and $35.1\%$ of truth checking's (95\% CI of the difference $[+1.3,+5.4]$). Both exceed the $20.0\%$ in the property pool, because both paths take properties from the front of the list, where models tend to write base cases. The $3.3$-point difference comes mainly from incomplete checking: among properties proved true of the target, $15.0\%$ of general relations remain unresolved or have no witness found, versus $1.7\%$ of point and boundary facts; this is the same limitation that bounds coverage in RQ5. A Kitagawa decomposition (no interaction residue) assigns only $0.4$ of \tool's $11.8$-point lead over truth checking to the class mix, and on general relations alone \tool is still $16.3$ points higher ($90.1\%$ versus $73.8\%$, 95\% CI $[+11.1,+21.5]$). The classes do not measure importance, but attributable boundary facts are not trivial either: in RQ2 they detect changes about as often as general relations do ($60.9\%$ versus $68.9\%$).

\paragraph{Reassessment without code} The cross-model reassessment reads the function body and may use it to fill in what a sentence leaves unsaid. We therefore reassess all published sentences under two stricter conditions. In the first, the reassessment model sees only the interface (function name, formal parameters, input constraint, and return contract); the second additionally requires the recovered property to be proved equivalent in ACL2 to the $\psi$ delivered with the sentence. Pooled over the three models, \tool remains the highest of the six methods ($75.7\%$ and $52.7\%$), $9.3$ and $6.4$ points above truth checking (95\% CIs $[+6.0,+12.7]$ and $[+3.4,+9.6]$). Its rate differences from mutation filtering are not significant ($+3.6$ and $+3.3$ points, 95\% CIs $[-0.1,+7.3]$ and $[-1.2,+7.6]$), but \tool delivers $269$ and $194$ more attributable sentences (95\% CIs $[+198,+341]$ and $[+130,+258]$). Without the code, $20.5\%$ of the sentences originally confirmed for \tool and $21.2\%$ of those for truth checking are no longer confirmed. The losses are similar, so \tool's advantage does not depend on the reassessment model reading the code.

\paragraph{Sensitivity and stability of the recheck} The sentence recheck is sensitive to changes in a sentence's meaning, and its results are stable. A model that sees only the sentence rewrote $230$ confirmed-attributable sentences, stratified across the three paths. When a rewrite changes one result, premise, or constant, $70.0\%$ of the rewritten sentences fail the recheck and $45.2\%$ yield a property that is false for the target; when it changes only the wording, $86.5\%$ still pass. Of $337$ sentences reassessed twice, $94.1\%$ (\tool) and $91.3\%$ (truth checking) yield equivalent properties in both runs.

\subsection{RQ4 (Parameter Effects): How Do Budget Settings Affect the Yield of Attributable Content?}
\label{sec:budget}
Larger candidate budgets and search scopes yield more attributable content, with diminishing returns, and writing more sentences per function barely lowers the attribution rate. Instead of regenerating, we truncate the saved records to vary the number of candidates per function $k$ (Figure~\ref{fig:budget}a), the witness-search scope (Figure~\ref{fig:budget}b), and the number of sentences per function $m$ (Figure~\ref{fig:budget}c,d).

\begin{figure}[!htb]
\centering
\begin{tikzpicture}
\begin{groupplot}[
  tpFig, scale only axis,
  group style={group size=4 by 1, horizontal sep=1.05cm},
  width=0.163\textwidth, height=2.9cm,
  axis x line*=bottom,
  legend style={legend columns=4, draw=none, fill=none,
                /tikz/every even column/.append style={column sep=6pt}},
]
\nextgroupplot[xmin=1.5, xmax=16.5, xtick={3,5,10,15}, xlabel={Candidates per function $k$},
  ymin=20, ymax=80, ytick={20,35,50,65,80},
  ylabel={Attributable (\%)}, title={(a) Attributable share}, legend to name=budgetlegend]
\addplot[color=tpOk,mark=*,mark size=1.6pt,line width=.9pt] coordinates {(3,73.9) (5,69.9) (10,63.2) (15,57.8)};
\addlegendentry{GPT-5.6-terra}
\addplot[color=tpDS,mark=square*,mark size=1.5pt,line width=.9pt] coordinates {(3,64.3) (5,56.3) (10,50.7) (15,45.4)};
\addlegendentry{DeepSeek-v4-flash}
\addplot[color=tpHuman,mark=diamond*,mark size=1.8pt,line width=.9pt] coordinates {(3,43.8) (5,41.6) (10,39.1) (15,38.3)};
\addlegendentry{Human-written}
\addplot[color=tpQW,mark=triangle*,mark size=1.9pt,line width=.9pt] coordinates {(3,41.4) (5,35.5) (10,29.0) (15,26.4)};
\addlegendentry{Qwen2.5-7B}
\nextgroupplot[xmin=0.5, xmax=16.5, xtick={1,4,8,12,16},
  ymin=0, ymax=700, ytick={0,200,400,600},
  xlabel={Search scope}, ylabel={Properties}, title={(b) Still without a witness}]
\addplot[color=tpOk,mark=*,mark size=1.4pt,line width=.9pt] coordinates {(1,690) (2,324) (3,318) (4,317) (5,308) (8,213) (12,113) (16,63)};
\addplot[color=tpDS,mark=square*,mark size=1.3pt,line width=.9pt] coordinates {(1,554) (2,239) (3,231) (4,230) (5,226) (8,178) (12,109) (16,74)};
\addplot[color=tpHuman,mark=diamond*,mark size=1.6pt,line width=.9pt] coordinates {(1,240) (2,126) (3,124) (4,123) (5,116) (8,85) (12,58) (16,37)};
\addplot[color=tpQW,mark=triangle*,mark size=1.7pt,line width=.9pt] coordinates {(1,279) (2,86) (3,82) (4,81) (5,81) (8,52) (12,30) (16,25)};
\nextgroupplot[xmin=0.6, xmax=4.4, xtick={1,2,3,4},
  ymin=30, ymax=100, ytick={30,50,70,90},
  xlabel={Sentences per function}, ylabel={Attribution rate (\%)}, title={(c) Attribution rate}, legend to name=armlegend]
\addplot[color=tpAttr,mark=*,mark size=1.4pt,line width=.9pt] coordinates {(1,96.7) (2,95.8) (3,94.9) (4,94.7)};
\addlegendentry{\tool}
\addplot[color=tpHuman,mark=square*,mark size=1.3pt,line width=.9pt] coordinates {(1,86.9) (2,86.4) (3,83.6) (4,82.9)};
\addlegendentry{Truth check}
\addplot[color=black!45,mark=triangle*,mark size=1.7pt,line width=.9pt] coordinates {(1,37.6) (2,48.0) (3,50.1) (4,49.7)};
\addlegendentry{Baseline}
\nextgroupplot[xmin=0.6, xmax=4.4, xtick={1,2,3,4},
  ymin=0, ymax=1200, ytick={0,300,600,900,1200},
  xlabel={Sentences per function}, ylabel={Sentences}, title={(d) Attributable sentences}]
\addplot[color=tpAttr,mark=*,mark size=1.4pt,line width=.9pt] coordinates {(1,354) (2,683) (3,973) (4,1172)};
\addplot[color=tpHuman,mark=square*,mark size=1.3pt,line width=.9pt] coordinates {(1,325) (2,627) (3,876) (4,1078)};
\addplot[color=black!45,mark=triangle*,mark size=1.7pt,line width=.9pt] coordinates {(1,168) (2,429) (3,670) (4,857)};
\end{groupplot}
\end{tikzpicture}
\pgfplotslegendfromname{budgetlegend}
\pgfplotslegendfromname{armlegend}
\caption{Effects of the candidate budget, the search scope, and the number of sentences per function}
\Description{Four line plots in one row. (a) As the candidate budget grows, the share of attributable properties falls for all four property sources. (b) The number of properties still without a witness when the search is limited to a prefix of the alternative implementations drops sharply at the second alternative and keeps falling to the no-witness-found count of the full search. (c, d) With published sentences truncated per function, DocAuth's attribution rate stays above that of truth checking, and the number of attributable sentences delivered by all three methods grows with the number of sentences per function.}
\label{fig:budget}
\end{figure}

\paragraph{Candidate budget} When each function's candidates are truncated to the first $k\in\{3,5,10,15\}$, the number of attributable properties grows with $k$ while their share falls (Figure~\ref{fig:budget}a). For GPT-5.6-terra, the count rises from $311$ to $1{,}209$ and the share drops from $73.9\%$ to $57.8\%$. The other three groups follow the same trend; candidates later in the list more often turn out false for the target or unresolved. The change-detection rate of attributable properties stays between $55\%$ and $70\%$ at every $k$. The number of functions with enough attributable properties for four sentences saturates quickly: for GPT-5.6-terra, it is $113$ at $k=10$, and $k=15$ adds only $5$ more.

\paragraph{Search scope} The witness search tries alternative implementations in a fixed order and stops at the first witness. The number of properties still without a witness drops most sharply at the second alternative, from $690$ to $324$ for GPT-5.6-terra; after the $16$th, $63$ remain, which is the no-witness count in Table~\ref{tab:study} (Figure~\ref{fig:budget}b). The curves are still falling at that point, so a larger scope would find witnesses for some of these properties too.

\paragraph{Sentences per function} When each function's published sentences are truncated to the first $m\in\{1,2,3,4\}$, \tool's attribution rate falls only from $96.7\%$ to $94.7\%$, while its attributable sentences grow from $354$ to $1{,}172$. At every $m$, its rate is $9.4$--$11.8$ points above that of truth checking (Figure~\ref{fig:budget}c,d).

\subsection{RQ5 (Coverage and Cost): How Many Functions Are Covered, and at What Cost?}
\label{sec:instrument}
The four property sources cover $70.0\%$--$84.7\%$ of the functions, and coverage is limited mainly by unresolved checks rather than by the witness search. Running \tool end to end takes a median of $219.8$ seconds per function, of which only $14.1$ seconds are spent in the prover, and costs about $\$0.048$ in model calls.

\paragraph{Function coverage} A function counts as covered if it obtains at least one attributable property, and the denominator is all $150$ targets (Table~\ref{tab:coverage}). Coverage depends on how attributable properties are distributed across functions, not only on their number: GPT-5.6-terra obtains more attributable properties than DeepSeek-v4-flash ($1{,}209$ versus $921$) but covers fewer functions ($123$ versus $127$).

\begin{table}[!htb]
\centering
\tpTableSetup
\caption{Function coverage by property source}
\label{tab:coverage}
\begin{tabular}{@{}lrrr@{}}
\toprule
Property source & Properties checked & Functions covered & Coverage \\
\midrule
Qwen2.5-7B & 1{,}743 & 114/150 & 76.0\% \\
DeepSeek-v4-flash & 2{,}030 & 127/150 & 84.7\% \\
GPT-5.6-terra & 2{,}090 & 123/150 & 82.0\% \\
Human-written & 1{,}068 & 105/150 & 70.0\% \\
\bottomrule
\end{tabular}
\end{table}

\begin{figure}[!htb]
\centering
\begin{tikzpicture}
\begin{groupplot}[
  tpFig, scale only axis, group style={group size=2 by 1, horizontal sep=1.6cm},
  width=0.40\textwidth, height=4.2cm,
  ymin=0, ymax=100, ytick={0,25,50,75,100},
  axis x line*=bottom,
  ybar stacked, /pgf/bar width=15pt,
  symbolic x coords={Qwen,DeepSeek,GPT,Human}, xtick=data,
  xticklabels={{Qwen2.5-\\7B},{DeepSeek-\\v4-flash},{GPT-5.6-\\terra},{Human-\\written}},
  xticklabel style={align=center},
  enlarge x limits=0.18,
  legend image code/.code={\draw[#1,draw=none] (0cm,-0.085cm) rectangle (0.30cm,0.105cm);},
  legend style={legend columns=1, at={(0.02,-0.20)}, anchor=north west,
                cells={anchor=west}, font=\scriptsize, draw=none},
  /tikz/rq5dark/.style={font=\tiny, text=white},
  /tikz/rq5light/.style={font=\tiny, text=black!80},
]
\nextgroupplot[ylabel={Share of checked properties (\%)}, title={(a) Check outcomes}, legend columns=3,
  after end axis/.code={
  \node[rq5dark] at (axis cs:Qwen,13.2) {26.4};
  \node[rq5dark] at (axis cs:Qwen,37.1) {21.4};
  \node[rq5dark] at (axis cs:DeepSeek,22.7) {45.4};
  \node[rq5dark] at (axis cs:GPT,28.9) {57.8};
  \node[rq5dark] at (axis cs:Human,19.1) {38.3};
  \node[rq5light] at (axis cs:Qwen,76.9) {46.1};
  \node[rq5light] at (axis cs:DeepSeek,79.9) {40.1};
  \node[rq5light] at (axis cs:GPT,82.8) {34.4};
  \node[rq5light] at (axis cs:Human,71.8) {56.4};}]
\addplot[fill=tpAttr,draw=white,line width=.3pt] coordinates {(Qwen,26.4) (DeepSeek,45.4) (GPT,57.8) (Human,38.3)};
\addlegendentry{Attributable}
\addplot[fill=tpHuman,draw=white,line width=.3pt] coordinates {(Qwen,21.4) (DeepSeek,7.0) (GPT,1.3) (Human,0.0)};
\addlegendentry{False for the target}
\addplot[fill=tpHuman!45,draw=white,line width=.3pt] coordinates {(Qwen,4.6) (DeepSeek,3.8) (GPT,3.4) (Human,1.9)};
\addlegendentry{Non-attributable}
\addplot[fill=black!60,draw=white,line width=.3pt] coordinates {(Qwen,1.4) (DeepSeek,3.6) (GPT,3.0) (Human,3.5)};
\addlegendentry{No witness found}
\addplot[fill=black!10,draw=white,line width=.3pt] coordinates {(Qwen,46.1) (DeepSeek,40.1) (GPT,34.4) (Human,56.4)};
\addlegendentry{Unresolved}
\nextgroupplot[ylabel={Share of unresolved properties (\%)}, title={(b) Reasons for unresolved outcomes},
  after end axis/.code={
  \node[rq5dark] at (axis cs:Qwen,12.8) {25.6};
  \node[rq5dark] at (axis cs:DeepSeek,16.9) {33.7};
  \node[rq5dark] at (axis cs:GPT,20.2) {40.3};
  \node[rq5dark] at (axis cs:Human,15.4) {30.9};
  \node[rq5dark] at (axis cs:Qwen,37.4) {23.5};
  \node[rq5dark] at (axis cs:DeepSeek,44.1) {20.6};
  \node[rq5dark] at (axis cs:GPT,50.5) {20.3};
  \node[rq5dark] at (axis cs:Human,52.4) {43.0};
  \node[rq5light] at (axis cs:Qwen,68.2) {38.3};
  \node[rq5light] at (axis cs:DeepSeek,71.0) {33.4};
  \node[rq5light] at (axis cs:GPT,74.8) {28.5};
  \node[rq5light] at (axis cs:Human,82.6) {17.3};
  \node[rq5light] at (axis cs:Qwen,93.7) {12.6};
  \node[rq5light] at (axis cs:DeepSeek,93.9) {12.3};
  \node[rq5light] at (axis cs:GPT,94.6) {10.8};
  \node[rq5light] at (axis cs:Human,95.6) {8.8};}]
\addplot[fill=black!70,draw=white,line width=.3pt] coordinates {(Qwen,25.6) (DeepSeek,33.7) (GPT,40.3) (Human,30.9)};
\addlegendentry{No target proof}
\addplot[fill=black!50,draw=white,line width=.3pt] coordinates {(Qwen,23.5) (DeepSeek,20.6) (GPT,20.3) (Human,43.0)};
\addlegendentry{Missing dependency or erasure problem}
\addplot[fill=black!30,draw=white,line width=.3pt] coordinates {(Qwen,38.3) (DeepSeek,33.4) (GPT,28.5) (Human,17.3)};
\addlegendentry{Evaluation failure or unmet premises}
\addplot[fill=black!12,draw=white,line width=.3pt] coordinates {(Qwen,12.6) (DeepSeek,12.3) (GPT,10.8) (Human,8.8)};
\addlegendentry{Parsing, timeout, or other}
\end{groupplot}
\end{tikzpicture}
\caption{Check outcomes and reasons for unresolved outcomes by property source}
\Description{Each panel has four stacked percentage bars for Qwen2.5-7B, DeepSeek-v4-flash, GPT-5.6-terra, and human-written properties. Panel (a) partitions all checked properties into attributable, false for the target, non-attributable, no witness found, and unresolved. Panel (b) partitions each source's unresolved properties into missing target proofs, missing dependencies or erasure-environment problems, evaluation failures or unmet premises, and other incomplete checks.}
\label{fig:rq5-outcomes}
\end{figure}

\begin{figure}[!htb]
\centering
\includegraphics[width=\textwidth]{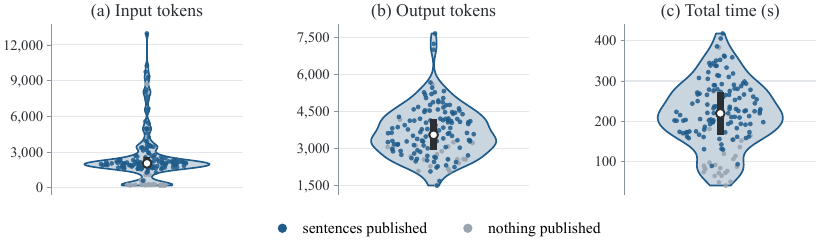}
\caption{Distribution of token usage and total time per function}
\Description{Three violin plots show the distributions of input tokens, output tokens, and total time over the 149 functions. Each plot has a quartile bar and a median dot in the middle, and each function is drawn as a point: dark blue for functions that publish sentences and gray for those that publish nothing. Functions that publish nothing cluster at the low end of total time and also use the fewest input tokens.}
\label{fig:cost}
\end{figure}

\paragraph{What limits coverage} Unresolved outcomes account for $34.4\%$--$56.4\%$ of all checked properties, far more than the $1.4\%$--$3.6\%$ with no witness found (Figure~\ref{fig:rq5-outcomes}a). For generated properties, most unresolved outcomes stem from missing target proofs or evaluation failures (Figure~\ref{fig:rq5-outcomes}b). Human-written theorems have the highest unresolved share, and $43.0\%$ of their unresolved outcomes are due to missing dependencies or erasure-environment problems: these theorems often involve the target indirectly, through other library functions that call it, and such properties are left undecided in the erased environment (Section~\ref{sec:audit-outcomes}).

\paragraph{Runtime cost}
\label{sec:cost}
We time an end-to-end run of \tool from fresh candidates with GPT-5.6-terra on all $150$ functions ($15$ candidates per function, a single process on an Intel i7-14700HX). The model service's content filter rejects one function's definition; the remaining $149$ complete the measurement (Figure~\ref{fig:cost}). The median total time per function is $219.8$ seconds ($p_{10}$--$p_{90}$: $91.8$--$324.1$), of which $197.5$ seconds are spent on model calls and $14.1$ seconds in the prover. The median input and output token counts are $2{,}054$ and $3{,}551$. At the list price of GPT-5.6-terra ($\$2$ and $\$12$ per million input and output tokens), the median model-call cost is about $\$0.048$ per function, or about $\$7.3$ in total for the $149$ functions. The $27$ functions with no published sentence stop after candidate checking, with a median total time of only $90.1$ seconds. Timings depend on API response speed, and measurements taken at different times can vary considerably.

\section{Discussion}
\label{sec:discussion}

Attribution fills a gap in how documentation is evaluated. In software engineering, a test is judged not only by whether it passes but also by whether it fails on faulty implementations~\citep{jia2011mutation}. Generated documentation and specifications, by contrast, are mostly judged by the first criterion alone: whether they can be proved or are consistent with the code~\citep{ma2025specgen,sun2023clover}. For statements that describe a function's behavior, attribution imposes the second criterion: a true statement should also fail for at least one implementation that satisfies the same interface. Since stronger models mainly reduce errors rather than non-attribution (RQ1), and LLM judging helps little (RQ3), evaluations of documentation- and specification-generation methods can report the attribution rate alongside correctness measures.

Unlike scores based on textual similarity to reference comments or on human or LLM ratings, every attribution verdict comes with checkable evidence: a target proof and a witness confirm attribution, and an erasure proof establishes non-attribution; for a sentence, the verdict concerns the property $\psi$ recovered from it. Because the measure is relative to the library's published interface, a stronger interface leaves fewer statements attributable, so attribution rates of different methods should be compared under the same interface. The measure also applies to existing documentation: when it is drawn from library theorems, lemmas written for the prover can be set aside first (RQ1).

We see three directions for future work. First, attribution can be extended to documentation comments in mainstream languages. There, executable alternative implementations that satisfy a signature or contract can supply witnesses, whereas proving that a statement holds for the target and establishing non-attribution still require a verifier. Second, attribution can be combined with measures of importance, completeness, and readability into an overall assessment of documentation quality, and developer studies can test how this assessment relates to the usefulness of documentation. Third, studies of real code evolution can test whether attributable statements also serve as regression checks; RQ2 already provides preliminary evidence for single-point edits.

\section{Threats to Validity}
\label{sec:threats}

\textbf{Construct validity and developer feedback.} The attribution rate measures whether statements meet the basic requirement of describing behavior beyond the interface. Whether documentation is useful to developers also depends on importance, completeness, and readability, which need their own measures. We did not conduct a developer study, so it remains untested whether attributable documentation helps developers understand and maintain code, and how much effort it takes to inspect the attached evidence.

\textbf{Measurement.} The formal meaning of each sentence is recovered by an LLM, and the reassessment uses the same checker as publication. The interface-only reassessment and the meaning-change experiment in RQ3 reduce the risk that recovery relies on the code, but they do not replace human judgment of what a sentence means.

\textbf{Experimental scope.} The experiments cover stateless functions with extractable definitions from the ACL2 Community Books and three LLMs, each run once at temperature $0$; other prompts, sampling settings, or model versions may yield different candidate properties. Programs with mutable state, external interaction, or concurrency place different demands on interface modeling and alternative implementations. Other provers need their own interface extraction and erasure environment; there, the natural counterpart is a bodyless function constrained by its declared specification.

\section{Related Work}
\label{sec:related}
\paragraph{Can specifications detect incorrect implementations?}
nl2postcond~\citep{endres2024nl2postcond} generates postconditions from natural-language requirements and checks both whether they are correct and whether they distinguish correct from incorrect code. MutDafny~\citep{amaral2025mutdafny} mutates code to find specifications too weak to notice the change, and IronSpec~\citep{goldweber2024ironspec} tests whether specifications match developer intent. Like these methods, we change the implementation and observe whether a property still holds. Unlike them, we measure against the public interface, judge individual statements already known to be true, and can prove non-attribution by erasure instead of merely failing to find a distinguishing implementation. Vacuity checking and semantic dependence likewise change one part of an artifact to decide whether a conclusion depends on it~\citep{beer2001vacuity,podgurski1990dependence,chockler2008causes}; we instead change the whole function's behavior while keeping its published interface.

\paragraph{Does documentation describe code accurately?}
LLMs are already used to generate code comments and summaries~\citep{geng2024fewshot,sun2023chatgpt}, which are mostly evaluated by textual similarity to reference comments or by the correctness of their content, that is, by whether the documentation says anything wrong. \citet{panthaplackel2021deep} detect comments that become outdated as code changes, Clover~\citep{sun2023clover} checks consistency among code, documentation, and formal annotations, and CASCADE~\citep{kiecker2026cascade} generates tests from documentation and runs them on the code to find documentation that contradicts it. \emph{Attribution} in natural language generation is a different notion: it requires generated text to be supported by an identifiable source~\citep{rashkin2023measuring}. Truth, source support, and consistency with the code do not rule out the sentences in Figure~\ref{fig:motivating}, which are consistent with the code and well supported yet only restate the interface or a general fact; this is precisely what our notion of attribution checks. Documentary~\citep{eghbali2026documentary} and VeriEquivBench~\citep{zeng2025veriequivbench} check whether an entire document or specification is behaviorally equivalent to the code. Their unit of decision is overall behavior, whereas we examine the behavioral information that a single statement provides. \citet{khatib2026mutation} change the code, regenerate summaries, and observe whether the summaries change, which evaluates the summarization model itself; we do not regenerate, but check whether each written sentence still holds for alternative implementations.

\paragraph{Property generation and documentation maintenance}
Tools such as QuickSpec~\citep{claessen2010quickspec}, Daikon~\citep{ernst2007daikon}, and SpecGen~\citep{ma2025specgen} automatically discover or generate function properties, and some of them prune redundant results. QuickSpec does not report equations that follow from ones it has already found; IsaCoSy~\citep{johansson2011conjecture}, HipSpec~\citep{claessen2013hipspec}, and Hipster~\citep{johansson2014hipster} discard conjectures that follow from known lemmas during theory exploration; and Daikon suppresses invariants implied by other invariants~\citep{ernst2000quickly}. Such pruning is relative to the other properties already obtained, and a pruned property may still describe implementation-specific behavior. Attribution, by contrast, is relative to the library's published interface. It judges each true statement on its own, asking whether it says anything about behavior beyond the interface, and it confirms attribution with a witness and establishes non-attribution with an erasure proof, providing a basis for selecting documentation content. DocAuth's recheck recovers formal properties from natural language and is therefore related to autoformalization~\citep{wu2022autoformalization,cosler2023nl2spec}; rather than aiming for a faithful sentence-by-sentence translation, however, we check whether the recovered property is still attributable. XDoc~\citep{davis2014industrial} organizes and publishes ACL2 library documentation, which complements our content selection.

\Needspace{5\baselineskip}
\section{Conclusion}

A provably true documentation statement may still do no more than restate the interface or a general fact. This paper turns that distinction into a checkable condition: relative to the public interface, a statement that holds for the target is attributable exactly when it fails for some alternative implementation that satisfies the same interface, and an erasure proof establishes when no such implementation exists. Non-attributable properties occur in about a third of the functions we studied, whereas most attributable properties detect interface-preserving implementation changes. By adding attribution on top of truth checking, DocAuth delivers more attributable sentences than both truth checking and mutation filtering, at a higher attribution rate, and publishes each sentence with the formal property recovered from it and that property's evidence. Just as a test should not only pass but also fail on faulty implementations, a documentation statement that describes a function's behavior should not only be true but also say more than the interface does. Attribution checks this basic requirement. It complements correctness, can guide documentation generation, and can serve as a measure for evaluating documentation and specifications.

\section{Data Availability}
The replication package for this paper will be deposited in a permanent archive. It contains source-code snapshots, experimental inputs and outputs, verification evidence, prompts and model configurations, and scripts for offline recomputation, and it retains the records of selection, exclusion, and unresolved cases. The package also describes the execution environment and the configuration of external services.

\bibliographystyle{ACM-Reference-Format}
\bibliography{Attribution_FSE2027_refs}
\end{document}